\newcommand{\V}{V} 
\newcommand{\SSS}{[N]} 
\newcommand{\Objection}{\mathcal{P} } 
\newcommand{\fctV}{v} 
\newcommand{\ContreObjection}{\mathcal{Q} } 
\newcommand{\ShortestPath}{\mathcal{SH}} 
\newcommand{\C}{\mathcal{C}} 
\newcommand{\imputation}{x} 
\newcommand{\price}{p} 
\newcommand{\X}{X} 
\newtheorem{thm}{Theorem}
\newtheorem{corollary}[thm]{Corollary}
\newtheorem{theorem}[thm]{Theorem}
\newtheorem{example}[thm]{Example}
\newtheorem{definition}[thm]{Definition}
\newtheorem{lemma}[thm]{Lemma}
\newcommand{\player}{o}
\newcommand{\Org}[1]{\ensuremath{O^{(#1)}}\xspace}
\newcommand{\Job}[1]{\ensuremath{\mathcal{J}^{(#1)}}\xspace}
\newcommand{\ekloc}[1]{\ensuremath{E^{(#1)}_{\text{local}}\xspace}}
\newcommand{\costlocal}[1]{\ensuremath{\text{cost}^{(#1)}_{\text{local}}}\xspace}
\newcommand{\costcoop}[1]{\ensuremath{\text{cost}^{(#1)}}\xspace}
\newcommand{\Cost}[1]{\costcoop{#1}}
\newcommand{\CostS}[1]{C_{\ShortestPath}^{(#1)}}
\begin{document}

\title{Detecting service provider alliances on the choreography enactment pricing game\footnote{This work was supported by projet CAPES-COFECUB MA 828-15 CHOOSING}}
\author[1]{Johanne Cohen}
\author[2]{Daniel Cordeiro}
\author[3]{Loubna Echabbi}

\affil[1]{LRI, Universit\'e Paris Sud, France}  
\affil[2]{Universidade de S\~ao Paulo, Brazil}
\affil[1]{STRS Lab., INPT, Morocco}

\maketitle
%
%
\begin{abstract}
  We present the choreography enactment pricing game, a cooperative
  game-theoretic model for the study of scheduling of jobs using
  competitor service providers. A choreography (a peer-to-peer service
  composition model) needs a set of services to fulfill its jobs
  requirements. Users must choose, for each requirement, which service
  providers will be used to enact the choreography at lowest cost. Due
  to the lack of centralization, vendors can form alliances to control
  the market. We show a novel algorithm capable of detecting
  alliances among service providers, based on our study of the
  bargaining set of this game.
\end{abstract}


\section{Introduction}

Modern distributed systems are usually modeled and described by a
service-oriented architecture that is completely
distributed. The ongoing transition from the current
service \emph{orchestration} model for service composition --- where
the ensemble of services are composed as an executable business
process, controlled by a single organization (the orchestrator) --- to
the service \emph{choreography} composition model~\cite{orc-vs-chor}
--- that describes a non-executable protocol for peer-to-peer
interactions among different organizations.

This new model is not only more robust and scalable, but also more
collaborative. The popularization of service choreography model
enforces interoperability and loose coupling by reflecting obligations
and constraints among different parties. This trend can be seem as a
disruptive change on distributed software development, creating new
opportunities for resource sharing among different organizations.

One of the implications of this trend is that even if a user chooses
one service vendor (e.g., a cloud computing provider) to execute its
application, there is nothing that prevents the vendor to subcontract
resources from other vendors and scatter the application services
among those vendors, creating a collaborative platform composed of
resources from different organizations.


Beyond the complexities of the management of composite services
(design, provisioning, etc.), we consider the problem from an economic
point of view. Since service vendors are not regulated, both
cooperative and non-cooperative behavior may be expected. This can
potentially lead to the formation of alliances --- i.e., the formation
of groups of similar independent organizations, who join together to
control prices and/or limit competition.


In this work we use tools and techniques from cooperative game
theory~\cite{courcoubetis} in order to analyze the stability and the
alliance formation on this new scenario; a problem we call the
\emph{choreography enactment pricing game}.

\section{Problem statement}
\label{sec:problem-statement}

In this paper, we study how task assignment between independent
service vendors can lead to the establishment of alliances on
unregulated economies. We call \emph{organization} a vendor, its
physical computational resources and the users subscribed to it.

To state the problem, we will use the standard notation for scheduling
on multi-organization platforms~\cite{mosp-europar}.  More formally,
we have a set of $N$ different organizations. Each organization
\Org{k}, $1 \le k \le N$, has $m^{(k)}$ machines that can be used to
execute jobs submitted by its users or by other organizations.

A user must choose an organization to execute its jobs. The
organization will work as a service broker, choosing which jobs it
will execute on its own resources and which jobs will have its
execution delegated to other organizations. The set of all jobs
submitted from users of organization \Org{k} is denoted by \Job{k}.

An organization also have costs associated to the execution of the
jobs assigned to its machines. It is, however, free to schedule these
jobs in any way they want, usually to optimize some given objective
(common to all organizations), such as execution cost or global
performance. We denote by $\costlocal{k}$ the cost of the execution of
jobs from \Org{k}'s users when applying the optimal scheduling
according to the organizations' own objective.

Organizations have the freedom to collaborate, organizing themselves
to share their resources, jobs and to redistribute costs.  An
organization that choose to not cooperate must execute all jobs from
its users on its own resources, paying $\costlocal{k}$ to do so.  A
set $\C$ of collaborating organizations share all their machines in
order to execute the jobs in set
$\Job{\C} = \displaystyle \cup_{k\in \C} \Job{k}$.  They are free to
devise a scheduling that optimize their own objective using their
combined resources. We denote as $\costcoop{\C}$ the cost of the
optimal schedule.  For the set $\Job{\C}$, the total price that all
organizations are ready to pay is given by
$\price(\C) = \sum_{k\in \C} \costlocal{k}$.

The payoff of a set of organizations $\C$ is given as a function of
the price charged by each organization to their users
$\price(\C) = \sum_{k\in \C} \costlocal{k} $. The
difference between the sum of the costs that each organization would
have if they were all alone, and the optimal cost achieved when these
organizations are collaborating is known as \emph{utility} $\fctV(\C)$
and is given by:
\begin{equation}
\label{eq:fctV:job}
 \fctV(\C) = 
    \price(\C)   -\costcoop{\C}
 \end{equation}
Using the terminology of  cooperative game theory, $\fctV$ is the \emph{characteristic function} of the game.
Note that if $\C$ is only composed of one organization $k$,  then the utility  is equal to zero ($\fctV(\C) =0)$.

We assume that all organizations have the same objective function. Two
classical objective functions found on the literature are:

\paragraph{($\sum_{J}C_{J}$) \cite{mosp-sumci}:}    Organizations are locally interested in minimizing the average completion time of their jobs.
If scheduling its jobs in its own resources, the cost to schedule \Job{k} in \Org{k}'s own resources is given by
$\costlocal{k}= \underset{ J \in \Job{k}  }\sum (C^{(k)}_{J})$.

\paragraph{($\sum_{J}E_{J}$) \cite{mosp-energia}:}  Each organization \Org{k}, $1 \le k
\le N$, can share several machines that supports continuous \emph{dynamic speed
  scaling} (i.e., processors can operate at any arbitrary
speed $s$ that can be changed by the scheduler over time). A job $J \in \Job{k}$, 
is defined by its release date $r_J^{(k)}=0$, its deadline $d_J^{(k)}$
and its processing volume $1$. The job with the biggest
deadline of \Org{k} is defined as $d^{(k)}_{\max} = \max_i d^{(k)}_i$.
Job preemption is allowed. The energy consumption is given by the integral over
time of the power function $P(s(t)) = s(t)^\alpha$, where $s(t)$ is
the speed in which the processor is running on time $t$ and $\alpha > 1$ is a constant real number that depends on
the technical characteristics of
the processor --- usually $\alpha \in [2,3]$. 
 In other words, an organization \Org{k} can execute its jobs consuming a total
energy of \ekloc{k} only using its own machines.

\begin{example}
\label{ex:mosp-energy}
  Consider an instance of $(\sum_{J}E_{J})$ problem.  Four organizations want to minimize its energy consumption. All these organizations have only one machine. Organizations  \Org{1} and \Org{2} have respectively $19$ and $7$ jobs.  Organizations \Org{3} and  \Org{4}  have $1$ job, and all jobs have the same  deadline date equals to $1$.
So, without cooperation, the local cost of organizations is the following : 
\begin{center}
 $ \costlocal{1} = 19^{\alpha}$,   $\costlocal{2} =7^{\alpha}$ , and   $ \costlocal{3} =\costlocal{4} = 1^{\alpha}$.
\end{center}  
 So each organization \Org{k}  has already to pay $\Cost{\{k\}}(=  p(\{k\}))$ to execute all these jobs.
 Assume that all four organizations form a alliance. This means that all jobs can be executed in all machines.  The optimal schedule is as follow : $7$ jobs are executed on each machine. So,
  the cost ($\Cost{\{1,2,3,4\}}$) of the alliance $\{1,2,3,4\}$  corresponds to the cost of the schedule of all jobs among all machines:   $\Cost{\{1,2,3,4\}} =  ( 4 \cdot 7^{\alpha} ) $. So the energy consumption savings  $\fctV(\{1,2,3,4\})$ is equal to $(19^{\alpha} + 7^{\alpha} + 2 ) - 4 \cdot 7^{\alpha} $. 
  However, organizations \Org{3} and  \Org{4}  increase this own cost, and now their cost is equal to $7^{\alpha}$. So, 
 these organizations should receive some payment  in order their organization to have incentive to take part in the alliance.
  Now, we will give some energy consumption savings of different alliances:
 \[
\begin{array}{ll}
\fctV(\{1,2,3,4\}) = (19^{\alpha} + 7^{\alpha} + 2 ) - 4 \cdot 7^{\alpha}  ;    &\fctV(\{2,3,4\}) = (7^{\alpha} + 2 ) -  3 \cdot 3^{\alpha}  \\
  \fctV(\{1,3,4\})=(19^{\alpha}  + 2 ) - 3 \cdot 7^{\alpha}  ;        &\fctV(\{3,4\}) = 0   \\
      \fctV(\{2,4\})=(7^{\alpha}  + 1^{\alpha} ) - 2 \cdot 4^{\alpha}  ;        &\fctV(\{2,3\}) =(7^{\alpha}  + 1^{\alpha} ) - 2 \cdot 4^{\alpha}  ;     \\

\end{array}
\]

Note that alliances $\{1,2,3,4\}$ and $\{1,3,4\}$ save the same amount of energy.
\end{example}

Let $\CostS{k}$ be the global cost of the cooperative schedule $\ShortestPath$ for organization \Org{k}.
The cooperative problem can then be stated as follows:

\begin{center}
  Find  $(\imputation_{1},\dots, \imputation_{N}$) such that, for all $k$ ($1 \leq k \leq N$), $\CostS{k} - \imputation_{i} \le \costlocal{k}$ \\ if such vector exists.
\end{center}

The vector $\imputation$ represents the payment for each organization to have incentive to collaborate.

In this paper, we will focus on games in which binding   agreements are possible. 
Now, we will focus on  the payoffs of each organization, on the way  which distributes the value of each coalition among its members.

\section{Cooperative game theory \cite{coalitiongame}}


A \emph{cooperative game with transferable utility} is a
pair $(\SSS, \fctV)$ where $\SSS = \{1, \dots, N\}$ is a finite set of
players and a \emph{characteristic function}
$\fctV: 2^{|N|} \to \mathbb{R}$ which associates each subset
$\C \subseteq \SSS$ to a real number $\fctV(\C)$ (such that
$\fctV(\emptyset) = 0$). Each subset $\fctV(\C)$ of $\SSS$ is called a
\emph{coalition}. The function $\fctV$ is a \emph{characteristic
  function} of the game $(\V, \fctV)$ and the \emph{value} of
coalition $\C$ denoted by $\fctV(\C)$ is the value that $\C$ could
obtain if they choose to cooperate. In these games, the value of a
coalition can be redistributed among its members in any possible way.

\subsection{Revenue sharing mechanism}

The challenge of a revenue sharing mechanisms is to find how to split the
payoff $\fctV(\C)$ among the players in $\C$ while ensuring the
stability of the coalition. A vector
$x=(\imputation_1,\dots,\imputation_{|\V|})$ is said to be a
\emph{payoff vector} for a $k$-coalition $\C_1, \dots, \C_k$ if
$\imputation_i \geq 0$ for any $i \in \V$ and
$\sum \limits_{i\in \C_j} \imputation_i \leq \fctV(\C_j)$ for any
$j \in \{1, \dots, k\}$. We will focus on some particular payoff vectors,
namely \textit{imputations}.

\begin{definition}
A payoff vector $\imputation$ for a $k$-coalition $\C_1, \dots, \C_k$ is said to be an \emph{imputation} if it is
\emph{efficient}, --- i.e., $\sum \limits_{i\in \C_j} \imputation_i \leq \fctV(\C_j)$
for any $j \in \{1, \dots, k\}$ --- and if it satisfies the individuality
rationality property --- i.e., $\imputation_i \geq \fctV(\{j\}) $ for any player $j \in \SSS$.
\end{definition}

The objective is to find a \emph{fair}
distribution of the value of the coalition (the payoff of each player
corresponds to his actual contribution to the coalition) and also to
ensure a \emph{stable} coalition in such a way that no player or
subset of players have incentive to leave the coalition.





\paragraph{The objection}

Let $(\SSS, \fctV)$ be a cooperative game and $\imputation$ a payoff
vector of this game.  A pair $(\Objection, y)$ is said to be
\emph{objection of ${i}$ against $j$} if:

\begin{itemize}
\item $\Objection$ is a subset of $\SSS$ such that $i \in \Objection$
  and $j \notin \Objection$ and

\item if $y$ is a vector in $\mathbb{R}^{\SSS}$ such that
  $y(\Objection) \leq \fctV(\Objection)$, for each
  $ k \in \mathcal{P}$, $y_k \geq \imputation_k$ and
  $y_i > \imputation_i$ (agent $i$ strictly benefits from $y$, and the
  other members of $\Objection$ do not do worse in $y$ than in
  $\imputation$).
\end{itemize}


%



\paragraph{The bargaining set.}\cite{Bargainingset,BargainingsetObjection}

A pair $(\ContreObjection, z)$ is said to be a \emph{counter-objection} to an objection
$(\Objection, y)$ if:

\begin{itemize}
\item $\ContreObjection$ is a subset of $\SSS$ such that
  $j \in \ContreObjection$ and $i \notin \ContreObjection$ and

\item if $z$ is a vector in $\mathbb{R}^{\SSS}$ such that
  $z(\Objection) \leq \fctV(\Objection)$, for each
  $k \in \ContreObjection \backslash \Objection$,
  $z_k \geq \imputation_k$ and, for each
  $k \in \ContreObjection \cap \Objection$, $z_k \geq y_k$ (the
  members of $Q$ which are also members of $\Objection$ get at least
  the value promised in the objection).
\end{itemize}

Let $(\SSS, \fctV)$ be a game with a coalition structure. A vector
$x \in \mathbb{R}^{\SSS}$ is \emph{stable} iff for each objection at
$\imputation$ there is a counter-objection.

\begin{definition}
  The bargaining set $\mathcal{B}((\SSS, \fctV))$ of a cooperative game
  $(\SSS, \fctV)$ is the set of stable payoff vectors that are
  \textit{individually rational}, that is,
  $\imputation_i \geq \fctV(\{i\})$.
\end{definition}

Note that it is sufficient to use the notion of imputations since the
payoffs are individually rational.

The bargaining set concept requires objections to be immune to
counter-objections, otherwise they are not considered as credible
threats.

Now, we will go back to our problem and try to define appropriate
characteristic function that reflects the outcome expected from
coalition formation as described earlier in the problem statement
section (Section~\ref{sec:problem-statement}).

\subsection{The choreography enactment pricing game} \label{subsection:choreography}
The choreography enactment game models the cooperative game played by organizations. Their main objective is to form coalitions in order to schedule all jobs belonging to 
them having the lowest cost. 

We consider particular objective functions for a schedule  satisfying the following condition:
 the cost  $(\costcoop{\SSS})$  is the sum of all costs $(\CostS{k})$  of all organizations \Org{k}. Note that this applies to the objective functions $(\sum_{J}E_{J})$  and $(\sum_{J}C_{J})$.

Assume that $\SSS$ should form a coalition. Each organization $\Org{k}$ executes  jobs having
$\CostS{k}$ as a cost. Its cost can be higher than its local cost $\costlocal{k}$. 

In order to incentive organization $\Org{k}$ to join a coalition, the cost savings should be redistributed. So, the  \emph{characteristic function} $\fctV$ of our game corresponds to the cost savings corresponding to Equation~\eqref{eq:fctV:job}.

 Let $\imputation$ be an  imputation. $\Org{k}$ receives $\imputation_{k}$ as payment. So  
$\Org{k}$ has incentive to be in the coalition if $\CostS{k} - \imputation_{i} \le \costlocal{k}$.

Moreover, for the rest of the document, we make two assumptions:
\begin{enumerate}
\item The algorithm building the scheduling assure the \emph{monotonicity property}, i.e,  the cost of scheduling jobs in $\mathcal{J}$ using $m$ machines is less that the cost of scheduling jobs in $\mathcal{J}$ using $m'$ machines if $m<m'$.
\item The total cost for scheduling among all organizations is the sum of the individual costs of each organization.
\end{enumerate}

Note that the costs $(\sum_{J}E_{J})$  and $(\sum_{J}C_{J})$ respect these two constraints. Recall that, without the notion of organizations, they  can be solved in polynomial time and respect the monotonicity property. 
Also note that
the first assumption implies $\fctV(\C) \leq  \fctV(D)$ for every pair of subset $\C, D \subseteq \SSS$ such that $\C\subseteq D$. 
 

\section{Stable coalitions on the choreography enactment pricing game}

This section is devoted to computing the imputation $\imputation$ for the coalition $\SSS$.

Lets focus on organizations that will receive a non-null retribution.
Consider the organizations where participating or not on a coalition
does not alter the amount of savings.

\begin{lemma}
  \label{lem:simple}
  Let $\SSS$ be a set of organizations and $\fctV$ be the characteristic function   (corresponding to the cost savings). Let $\imputation$ be a
  feasible stable imputation. For each organization $ \Org{j}$ in $\SSS$ such that
  $ \fctV(\SSS) = \fctV(\SSS \backslash\{j\})  $, we have $\imputation_j=0$.
\end{lemma}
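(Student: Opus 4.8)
The plan is a proof by contradiction that isolates player $j$ and tries to let $\SSS\setminus\{j\}$ object. Since $\imputation$ is an imputation, individual rationality gives $\imputation_j \geq \fctV(\{j\}) = 0$, so it suffices to rule out $\imputation_j > 0$; assume this henceforth. The trivial case $\SSS=\{j\}$ is immediate ($\fctV(\SSS)=\fctV(\emptyset)=0$ forces $\imputation_j=0$ by efficiency), so take $|\SSS|\geq 2$. Efficiency of $\imputation$ says $\sum_{k\in\SSS}\imputation_k \leq \fctV(\SSS)$, and combined with the hypothesis $\fctV(\SSS)=\fctV(\SSS\setminus\{j\})$ this yields
\[
 \sum_{k\in\SSS\setminus\{j\}}\imputation_k \;\leq\; \fctV(\SSS\setminus\{j\})-\imputation_j \;<\; \fctV(\SSS\setminus\{j\}),
\]
i.e.\ under $\imputation$ the coalition $\SSS\setminus\{j\}$ receives strictly less than what it can guarantee by itself.

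Next I would build the objection. Put $\delta := \fctV(\SSS\setminus\{j\}) - \sum_{k\in\SSS\setminus\{j\}}\imputation_k \geq \imputation_j > 0$, pick any $i\in\SSS\setminus\{j\}$, and define $y$ by $y_i := \imputation_i+\delta$ and $y_k := \imputation_k$ for $k\in\SSS\setminus\{i,j\}$ (the coordinate $y_j$ is irrelevant). Then $y(\SSS\setminus\{j\}) = \fctV(\SSS\setminus\{j\})$, every $k\in\SSS\setminus\{j\}$ satisfies $y_k\geq\imputation_k$, and $y_i>\imputation_i$, so $(\SSS\setminus\{j\},\,y)$ is an objection of $i$ against $j$.

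The heart of the proof is showing this objection admits no counter-objection. Let $(\ContreObjection,z)$ be a candidate: then $j\in\ContreObjection$, $i\notin\ContreObjection$, and $z(\ContreObjection)\leq\fctV(\ContreObjection)$. Because $\ContreObjection\setminus(\SSS\setminus\{j\})=\{j\}$ we must have $z_j\geq\imputation_j$, and because every $k\in\ContreObjection\cap(\SSS\setminus\{j\})=\ContreObjection\setminus\{j\}$ is distinct from $i$ we must have $z_k\geq y_k=\imputation_k$; summing, $\fctV(\ContreObjection)\geq z(\ContreObjection)\geq \imputation_j+\sum_{k\in\ContreObjection\setminus\{j\}}\imputation_k = \sum_{k\in\ContreObjection}\imputation_k$. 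So the existence of a counter-objection would force some coalition $\ContreObjection$ with $j\in\ContreObjection$, $i\notin\ContreObjection$ to satisfy $\fctV(\ContreObjection)\geq\sum_{k\in\ContreObjection}\imputation_k$, and the task is to derive a contradiction from this — which is the step I expect to be the main obstacle. Here I would exploit monotonicity of $\fctV$ (so $\fctV(\ContreObjection)\leq\fctV(\SSS\setminus\{i\})$) together with a careful choice of the objector $i$, and/or the additive scheduling structure of $\fctV$ (the cost-savings form of Equation~\eqref{eq:fctV:job}), to conclude $\fctV(\ContreObjection)<\sum_{k\in\ContreObjection}\imputation_k$ for every admissible $\ContreObjection$. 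Once the objection is seen to survive all counter-objections, $\imputation$ fails the stability condition of the bargaining set, contradicting the hypothesis; hence $\imputation_j=0$. (I note that if ``stable'' is read in the stronger, core sense, the first display already finishes the argument, since $\imputation(\SSS\setminus\{j\})\geq\fctV(\SSS\setminus\{j\})$ would then be required.)
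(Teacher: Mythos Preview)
Your opening reductions are correct, and the construction of an objection based at $\Objection=\SSS\setminus\{j\}$ is the right move. The genuine gap is exactly where you flag it, and it is not a minor technicality: with your choice of $y$ (all of the surplus $\delta$ handed to a single player $i$), every $k\in\ContreObjection\cap\Objection$ satisfies $y_k=\imputation_k$, so a counter-objection $(\ContreObjection,z)$ exists as soon as some $\ContreObjection\ni j$, $\ContreObjection\not\ni i$ satisfies $\fctV(\ContreObjection)\ge \imputation(\ContreObjection)$. Nothing in the bargaining-set stability of $\imputation$ forbids that inequality for \emph{every} such $\ContreObjection$ and \emph{every} admissible choice of $i$; monotonicity only gives $\fctV(\ContreObjection)\le \fctV(\SSS\setminus\{i\})$, which does not control $\imputation(\ContreObjection)$. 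So ``a careful choice of $i$'' alone will not close the argument.

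The paper's proof differs in three respects that together close the gap. First, it splits off the degenerate case where \emph{every} player $\ell$ satisfies $\fctV(\SSS)=\fctV(\SSS\setminus\{\ell\})$, arguing directly that then $\fctV(\SSS)=0$ and hence $\imputation\equiv 0$. Second, in the remaining case it fixes the objector to be a specific $i^*$ with $\fctV(\SSS)>\fctV(\SSS\setminus\{i^*\})$ and first establishes, by a separate objection argument, the marginal bound $\imputation_{i^*}\le \fctV(\SSS)-\fctV(\SSS\setminus\{i^*\})$. Third, and this is the point your construction misses, it \emph{spreads} the surplus $\imputation_j$ across all members of $\Objection=\SSS\setminus\{j\}$: it sets $y_{i^*}=\imputation_{i^*}+\varepsilon$ and $y_k=\imputation_k+\tfrac{\imputation_j-\varepsilon}{N-2}$ for the others. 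Then any counter-objection must give each $k\in\ContreObjection\setminus\{j\}$ at least $y_k>\imputation_k$, and combining this with the marginal bound on $\imputation_{i^*}$ forces $z_j\le\varepsilon<\imputation_j$, a contradiction (the paper carries this out for $\ContreObjection=\SSS\setminus\{i^*\}$). The moral: do not concentrate the surplus on the objector; distribute it, so that the overlap $\ContreObjection\cap\Objection$ is forced to overpay.
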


\begin{proof} Since $\fctV$ respects the monotony constraints: $ \fctV(\SSS) \geq \fctV(\SSS \backslash\{j\})  $ for any organization $ \Org{j}$.

First,  we consider the case where all organizations $ \Org{j}$ are such that  $ \fctV(\SSS) = \fctV(\SSS \backslash\{j\})  $.
It means that  $ \price(\{j\})  =  \costcoop{\SSS} - \costcoop{-j }$ for $\forall j \in  \SSS$. and the cost for $\SSS$ to execute jobs in $\Job{j}$ is equals to  the cost for  organization \Org{j} when it executes its own jobs alone.  Since $ \fctV(\SSS) =   \price(\SSS) - \costcoop{\SSS}$, we have 
$ \fctV(\SSS) =0$ and $\imputation_j=0$ for $\forall j \in  \SSS$.
 
 Second, we consider the case where at least an organization $\Org{i}$ is such that  $ \fctV(\SSS) >  \fctV(\SSS \backslash\{i\})  $. 
 Let $\Org{i^*}$ be such an organization.

 We prove this lemma by contradiction. Assume that there exists one organization $j \in \SSS$ such that
  $ \fctV(\SSS) = \fctV(\SSS \backslash\{j\})  $ and
  $\imputation_j > 0$.

We  prove that $x_{i^{*}}\leq   \fctV(\SSS) - \fctV(\SSS \backslash\{i^{*}\}) $. If it is not the case, then  
organization $\Org{k}\in \SSS$  could make an objection  $(\SSS\backslash\{i^{*}\},y)$
 against organization $\Org{i^{*}}$ such that
 $y_\ell =\imputation_\ell+\frac{\imputation_{i^{*}}-(\fctV(\SSS) - \fctV(\SSS \backslash\{i^{*}\}))}{N-1}$ for
 $\ell \in \SSS\backslash\{{i^{*}}\}$.  Since $\displaystyle \sum_{\ell \in  \SSS\backslash\{{i^{*}}\}} y_{\ell} =  \fctV(\SSS \backslash\{i\}) $, organization $\Org{i^{*}}$ cannot make a  counter-objection against $\Org{k}$.  This means that $\imputation$ is not a
  feasible stable imputation, which leads to a contradiction.

 Therefore, $x_{i^{*}}\leq   \fctV(\SSS) - \fctV(\SSS \backslash\{i\}) $. Organization $\Org{i^{*}}$ could make an objection  $(\SSS\backslash\{j\},y)$
 against organization $\Org{j}$ such that there exists an  $\varepsilon$ such that $   \imputation_j >  \varepsilon  \geq 0$, $y= \imputation_{i^*} +  \varepsilon $, and
 $y_k=\imputation_k+\frac{\imputation_j -\varepsilon}{N-2}$ for
 $k\in \SSS\backslash\{j,i^{*}\}$. Note that for any $k$, $y_k > \imputation_k$ since
 $x_j>0$.

Now lets prove by contradiction that organization $\Org{j}$ cannot make a
 counter-objection $(\ContreObjection,z)$. Assume that organization $\Org{j}$ can make a
 counter-objection $(\ContreObjection,z)$.  Let $\X$ be a set of organizations such that $\X= \SSS\backslash\{j,i\} $. By definition 
 of  counter-objection, we have $\forall k \in \X  z_{k}
\geq y_{k}$ and, $\displaystyle \sum_{k \in \X} z_{k} \geq  \fctV(\SSS) -  (x_{i} + \varepsilon)$.

 Since  $\fctV(\SSS \backslash\{i^*\})= \displaystyle \sum_{k \in \ContreObjection} z_{k}$, we obtain: 
 \[
\begin{array}{lll}
 z_{j}& = &  \fctV(\SSS \backslash\{i^*\}) - \displaystyle \sum_{k \in \X} z_{k} \leq \fctV(\SSS \backslash\{i^*\})  - \fctV(\SSS) +  (\imputation_{i^{*}} + \varepsilon)  \\
    & \leq & -  \imputation_{i^{*}} +  (\imputation_{i^{*}} + \varepsilon) \leq \varepsilon\\
\end{array}
\]

  This contradicts the
 fact with $0\leq  \varepsilon <  \imputation_j \leq  z_{j}$. Thus  $\imputation$ is stable and this conclude the proof.  \hfill \qed
\end{proof}
 
 Now, we focus on participating organizations that does change the amount of savings.

\begin{lemma}\label{lem:6.3}
  Let $\SSS$ be a set of organizations and $\fctV$ be the characteristic function   (corresponding to the cost savings).  Let $\Org{i}$ and  $\Org{j}$ be an organization such that $\fctV(\SSS) > \fctV(\SSS \backslash\{i\})$ and $\fctV(\SSS) > \fctV(\SSS \backslash\{j\})$.  Let
  $\mathcal{O}=\SSS\backslash\{j\}$ be a subset of organizations.

  Let $(\mathcal{O},y)$ be an
  objection of $\Org{i}$ against $\Org{j}$. In order to have a counter-objection
  to $(\ContreObjection, z)$, with
  $\ContreObjection= \SSS\backslash\{i\}$ of $\Org{j}$ against $\Org{i}$, a
  sufficient condition is:
\begin{equation}
  \label{eq:relation}
   \imputation_j -\imputation_i \leq \price(\{j\})   - \price(\{i\}) -   \Cost{-i} 
   +\Cost{-j}
\end{equation}
 \end{lemma}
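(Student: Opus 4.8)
My plan is to first notice that, although \eqref{eq:relation} is written with local prices and coalition costs, it is really a statement about the characteristic function alone. Since $\price$ is additive we have $\price(\SSS\setminus\{i\})=\price(\SSS)-\price(\{i\})$ and $\price(\SSS\setminus\{j\})=\price(\SSS)-\price(\{j\})$, so together with $\fctV(\C)=\price(\C)-\Cost{\C}$ one obtains
\[
\fctV(\SSS\setminus\{i\})-\fctV(\SSS\setminus\{j\})=\price(\{j\})-\price(\{i\})-\Cost{-i}+\Cost{-j}.
\]
Hence \eqref{eq:relation} is equivalent to $\imputation_j+\fctV(\SSS\setminus\{j\})\le \imputation_i+\fctV(\SSS\setminus\{i\})$, and this is the form I will use.

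\textbf{What the counter-objection must satisfy.} Next I would read off, from the definition of a counter-objection, the requirements on a pair $(\ContreObjection,z)$ with $\ContreObjection=\SSS\setminus\{i\}$ that counters the given objection $(\mathcal{O},y)$ with $\mathcal{O}=\SSS\setminus\{j\}$. Since $i\ne j$, we have $\ContreObjection\setminus\mathcal{O}=\{j\}$ and $\ContreObjection\cap\mathcal{O}=\SSS\setminus\{i,j\}$, so $(\ContreObjection,z)$ is a counter-objection iff (i) $z_j\ge \imputation_j$, (ii) $z_k\ge y_k$ for every $k\in\SSS\setminus\{i,j\}$, and (iii) the feasibility condition $\sum_{k\in\ContreObjection}z_k\le \fctV(\ContreObjection)=\fctV(\SSS\setminus\{i\})$ holds. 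Conditions (i)–(ii) are merely coordinatewise lower bounds, so the whole question is whether they can be met within budget (iii).

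\textbf{Budget check and explicit $z$.} The least total payoff compatible with (i)–(ii) is $\imputation_j+\sum_{k\in\SSS\setminus\{i,j\}}y_k$. Because $(\mathcal{O},y)$ is an objection, $\sum_{k\in\mathcal{O}}y_k\le \fctV(\mathcal{O})=\fctV(\SSS\setminus\{j\})$ and $y_i\ge \imputation_i$, hence
\[
\sum_{k\in\SSS\setminus\{i,j\}}y_k=\sum_{k\in\mathcal{O}}y_k-y_i\le \fctV(\SSS\setminus\{j\})-\imputation_i .
\]
Combining this with the reformulated \eqref{eq:relation} gives $\imputation_j+\sum_{k\in\SSS\setminus\{i,j\}}y_k\le \imputation_j+\fctV(\SSS\setminus\{j\})-\imputation_i\le \fctV(\SSS\setminus\{i\})$. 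Therefore the explicit vector $z$ given by $z_j:=\imputation_j$, $z_k:=y_k$ for $k\in\SSS\setminus\{i,j\}$ (and $z_i$ arbitrary, say $\imputation_i$, which is irrelevant since $i\notin\ContreObjection$) satisfies (i), (ii) and (iii) simultaneously, and also $z_k\ge 0$ everywhere since $\imputation_j\ge 0$ and $y_k\ge \imputation_k\ge 0$; so $(\ContreObjection,z)$ is the desired counter-objection.

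\textbf{Expected obstacle.} The only delicate part is the bookkeeping of the first two paragraphs: verifying the algebraic identity that turns \eqref{eq:relation} into $\imputation_j+\fctV(\SSS\setminus\{j\})\le \imputation_i+\fctV(\SSS\setminus\{i\})$ (this is exactly where additivity of $\price$ and the ``total cost equals sum of individual costs'' assumption are used), and correctly identifying $\ContreObjection\setminus\mathcal{O}$ and $\ContreObjection\cap\mathcal{O}$ so that the right lower bound is imposed on each coordinate. Once that is done, no optimization over the objection $y$ is required — the single inequality $\sum_{k\in\mathcal{O}}y_k\le \fctV(\mathcal{O})$ already delivers the bound, and the construction of $z$ is immediate. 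Note also that the hypotheses $\fctV(\SSS)>\fctV(\SSS\setminus\{i\})$ and $\fctV(\SSS)>\fctV(\SSS\setminus\{j\})$ are not needed for this implication; they only serve to place us outside the degenerate case already handled by Lemma~\ref{lem:simple}.
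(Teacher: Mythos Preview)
Your proof is correct and follows essentially the same route as the paper's: both fix $z_k=y_k$ on $\SSS\setminus\{i,j\}$, then verify that the remaining coordinate $z_j$ can be chosen $\ge \imputation_j$ while keeping $\sum_{k\in\ContreObjection}z_k\le \fctV(\ContreObjection)$, using $\sum_{k\in\mathcal{O}}y_k\le \fctV(\mathcal{O})$ and $y_i\ge \imputation_i$. Your preliminary rewriting of \eqref{eq:relation} as $\imputation_j+\fctV(\SSS\setminus\{j\})\le \imputation_i+\fctV(\SSS\setminus\{i\})$ is a cosmetic (and clarifying) step the paper omits, and your observation that the strict hypotheses $\fctV(\SSS)>\fctV(\SSS\setminus\{i\})$, $\fctV(\SSS)>\fctV(\SSS\setminus\{j\})$ are not used here is accurate.
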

\begin{proof}
  Assume that there is an objection $(\mathcal{O},y)$ of organization $\Org{i}$
  against organization $\Org{j}$. By definition of objection, we have that
  $\forall k \in \mathcal{O}$, $y_k \geq \imputation_k$ and
  $y_i > \imputation_i$.  Recall that
  $\fctV(\mathcal{O}) =   \price(\mathcal{O})   -\costcoop{\mathcal{O}}$.

Let $\X$ be a set of organizations such that $\X =\SSS \backslash\{i,j\}$.

Now, we will look for a counter-objection of organization $\Org{j}$ using $(\ContreObjection,z)$,  where, for each $k \in \ContreObjection\backslash \mathcal{O}$,
 $z_k \geq \imputation_k$ and for each $k \in \SSS\backslash\{i,j\}$,
 $z_k \geq y_k$. W.l.o.g, we can assume that  $\ContreObjection = \X \cup\{j\}$ and $k \in \X$,
 $z_k = y_k$, otherwise, we can build another counter-objection $\ContreObjection'$  such that $k \in \X$,  $z'_k = y_k$ and $z'_j=z_j +\sum_{k \in \X} (z'_k-  z_k)$.

By combining the definition of characteristic function $\fctV$, we obtain:
 $$\fctV(\ContreObjection)  - \fctV(\mathcal{O})    =  (\price(\ContreObjection)   -\costcoop{\ContreObjection} ) -(   \price(\mathcal{O})   -\costcoop{\mathcal{O}} ) $$
 
By definition of function $\price$, the previous equation can be rewritten

 \begin{equation}\label{eq:inter:1}
\fctV(\ContreObjection)  - \fctV(\mathcal{O})      =     \price(\{j\})   - \price(\{i\})  +\costcoop{\mathcal{O}}   - \costcoop{\ContreObjection}  
  \end{equation}

 Since $\fctV(\mathcal{O}) = \sum_{k\in \X} y_k +y_i$ and
 $\fctV(\ContreObjection) = \sum_{k\in \X} z_k + z_j$,
 Equation~\eqref{eq:inter:1} can be rewritten as:

 \begin{equation}\label{eq:inter:2}
\sum\limits_{k\in \ContreObjection}z_k -    \sum\limits_{k\in  \mathcal{O}}y_k    =
   \price(\{j\})   - \price(\{i\})  +\costcoop{\mathcal{O}}   - \costcoop{\ContreObjection}  \end{equation}

%

  Since $\sum\limits_{k\in \ContreObjection} z_{k}  - \sum\limits_{k\in \mathcal{O}}y_k   =  z_j - y_i
   + \sum\limits_{k\in \X} (y_k - z_k ) $, it is sufficient that:

 \begin{equation}\label{eq:inter:3}
 z_j - y_i
  \leq   \price(\{j\})   - \price(\{i\})   -\costcoop{\ContreObjection}  +\costcoop{\mathcal{O}}  
 \end{equation}
We can notice that $\costcoop{\mathcal{O}}= \costcoop{-j}$ and   $\costcoop{\ContreObjection}= \costcoop{-i}$. 
 From the definition of objection, it is sufficient to have:
 $$\imputation_j -\imputation_i \leq \price(\{j\})   - \price(\{i\}) -   \Cost{-i} 
   +\Cost{-j}.$$
    This concludes the proof of the lemma. \hfill \qed
 \end{proof}


 Lets focus on organizations that will receive non-null
retribution, i.e., organizations on the coalition that does have impact on the amount of cost savings.

%
\begin{theorem}\label{th:5}
Let $\SSS$ be a set of organizations and $\fctV$ be the characteristic function  (corresponding to the cost savings). Let $A$ be a subset of organizations $\{j\in \SSS :  \fctV(\SSS) > \fctV(\SSS \backslash\{j\})\}$. There exists a unique stable imputation $\imputation$ if $\imputation$ fulfills all the three following conditions:
  \begin{enumerate}
  \item $\forall j \in \SSS\backslash A$, $x_{j}=0$
  \item
    $ \displaystyle \forall j \in A$,  $\imputation_j =  \Cost{-j} +\price(\{j\})  - \frac{1}{|A|}\cdot \left( \Cost{A}  + \sum_{k\in A}\Cost{-k}  \right) $; and,
  \item
    $\forall j \in A$, $\Cost{-j} + \price(\{j\})\geq   \frac{1}{|A|}\cdot \left( \Cost{A}  + \sum_{k\in A}\Cost{-k}  \right)   $.
  \end{enumerate}

\end{theorem}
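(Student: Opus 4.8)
The plan is to pin down the unique candidate imputation by forcing it through Lemmas~\ref{lem:simple} and~\ref{lem:6.3} together with group rationality of the alliance $A$, and then to verify separately that this candidate is stable.

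\emph{Forcing the candidate.} By definition of $A$, every $\Org{j}\in\SSS\backslash A$ satisfies $\fctV(\SSS)=\fctV(\SSS\backslash\{j\})$, so Lemma~\ref{lem:simple} immediately gives $\imputation_j=0$, which is condition~1. Next I fix two organizations $\Org{i},\Org{j}\in A$ and consider the objection $(\SSS\backslash\{j\},y)$ of $\Org{i}$ against $\Org{j}$ together with the symmetric objection $(\SSS\backslash\{i\},y')$ of $\Org{j}$ against $\Org{i}$. Applying Lemma~\ref{lem:6.3} in both directions, a counter-objection exists in each case as soon as
\[
\imputation_j-\imputation_i\le \price(\{j\})-\price(\{i\})-\Cost{-i}+\Cost{-j}
\quad\text{and}\quad
\imputation_i-\imputation_j\le \price(\{i\})-\price(\{j\})-\Cost{-j}+\Cost{-i}.
\]
I would then show that in a stable imputation each of these must actually hold: if, say, the first fails, i.e.\ $\imputation_j-\imputation_i>\price(\{j\})-\price(\{i\})-\Cost{-i}+\Cost{-j}$, then $\Org{i}$ can mount an objection $(\SSS\backslash\{j\},y)$, spreading the excess over $\SSS\backslash\{j\}$ as in the proof of Lemma~\ref{lem:simple}, for which the identities $\costcoop{\SSS\backslash\{j\}}=\Cost{-j}$ and $\costcoop{\SSS\backslash\{i\}}=\Cost{-i}$ of Lemma~\ref{lem:6.3} force every value $z_j$ attainable by $\Org{j}$ in a counter-objection $(\SSS\backslash\{i\},z)$ to fall short of what $\Org{j}$ needs --- contradicting stability; the symmetric argument handles the second. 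Since the two inequalities read $a\le b$ and $-a\le -b$, they force $\imputation_j-\imputation_i=\price(\{j\})-\price(\{i\})-\Cost{-i}+\Cost{-j}$, so $\imputation_j-\price(\{j\})-\Cost{-j}$ takes a common value $-c$ over all $\Org{j}\in A$.

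\emph{Fixing the constant and recognising condition~3.} The alliance $A$ distributes exactly its own savings, so $\sum_{j\in A}\imputation_j=\fctV(A)=\sum_{k\in A}\price(\{k\})-\Cost{A}$ (condition~1 lets the sum range over $A$ only). Substituting $\imputation_j=\price(\{j\})+\Cost{-j}-c$ and solving this one linear equation for $c$ yields $c=\frac{1}{|A|}\bigl(\Cost{A}+\sum_{k\in A}\Cost{-k}\bigr)$, which is exactly condition~2. The resulting vector is an imputation iff $\imputation_j\ge\fctV(\{j\})=0$ for every $j$; this is vacuous for $j\notin A$ and is precisely condition~3 for $j\in A$. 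Hence, under condition~3, the vector described by conditions~1--2 is a genuine imputation, and the argument above shows it is the only one that can be stable.

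\emph{Stability of the candidate, and the main obstacle.} It remains to verify that, when condition~3 holds, the candidate is stable, i.e.\ every objection admits a counter-objection. For an objection of the shape treated in Lemma~\ref{lem:6.3} ($\Org{i}$ against $\Org{j}$ with $i,j\in A$, via $\SSS\backslash\{j\}$), condition~2 makes the inequality of that lemma an equality and hence produces a counter-objection. The remaining objections --- those by or against an organization of $\SSS\backslash A$ (whose payoff is $0$) and those using a proper sub-coalition of $\SSS\backslash\{j\}$ (which leaves the objector strictly less room) --- are dispatched by re-running the estimates of Lemmas~\ref{lem:simple} and~\ref{lem:6.3}. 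I expect the main obstacle to be the converse direction needed in the first step: upgrading the one-sided sufficient condition of Lemma~\ref{lem:6.3} to the two-sided equality of condition~2, which amounts to explicitly exhibiting, for any imputation that violates condition~2, an objection with no counter-objection, mirroring the contradiction argument of Lemma~\ref{lem:simple}. A secondary difficulty is the exhaustive case analysis in the stability check, showing that no family of objections involving $\SSS\backslash A$ or smaller coalitions can evade a counter-objection; this should follow from the monotonicity of $\fctV$ but must be carried out carefully.
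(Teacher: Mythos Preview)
Your approach mirrors the paper's: condition~1 via Lemma~\ref{lem:simple}; the pairwise equality $\imputation_j-\imputation_i=\price(\{j\})-\price(\{i\})-\Cost{-i}+\Cost{-j}$ via Lemma~\ref{lem:6.3} applied in both directions; summing over $A$ and using efficiency to extract condition~2; and condition~3 from $\imputation_j\ge 0$. You are in fact more explicit than the paper, which simply asserts the two-sided equality from Lemma~\ref{lem:6.3} and never verifies stability of the resulting vector, whereas you correctly flag both the converse of Lemma~\ref{lem:6.3} and the exhaustive stability check as the places where the real work lies.
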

\begin{proof}

Property (1) can be straightforward deduced from Lemma~\ref{lem:simple}.

  The bargaining set is the set of all imputations that do not admit a
  justified objection. So, if we apply Lemma~\ref{lem:6.3} to $i,j$ and
  then to $j,i$, we can derive that for any couple $(i,j) \in \SSS^2$, we
  have $$\imputation_j - \imputation_i =  \price(\{j\})   - \price(\{i\}) -   \Cost{-i} 
   +\Cost{-j}.$$

  Let $\Org{j}$ be an organization in $A$. Summing the previous equations, we obtain:
  \begin{equation}\label{eq:inter:4}
    \sum_{k\in A }\imputation_i -|A|\imputation_j  = \left(\sum_{k\in A}  \price(\{k\}) + \Cost{-k}-|A|(\Cost{-j}+\price(\{j\}))\right)
  \end{equation}

  From the properties of the value of the coalition and by computation, we can rewrite Equation~\eqref{eq:inter:4} as:

  \begin{equation}\label{eq:inter:5}
    \forall j \in A, \; \imputation_j  =   \Cost{-j} +\price(\{j\})  - \frac{1}{|A|}\cdot \left( \Cost{A}  + \sum_{k\in A}\Cost{-k} \right) 
  \end{equation}


  Property (3) can be deduced from the fact that $\imputation_j\geq 0$ and
  from Equation~\eqref{eq:inter:5}. \hfill \qed
\end{proof}

Now, we apply Theorem~\ref{th:5} to Example~\ref{ex:mosp-energy}.  Assume that organizations \Org{2} and  \Org{4} form an alliance ($\SSS=\{2,4\}$). Thus $\Cost{-2}=1^{\alpha}$ and  $\Cost{-4}=7^{\alpha}$. We obtain $\imputation_{2}=\imputation_{4}= \frac{1}{2}(7^{\alpha}  + 1^{\alpha} ) -  4^{\alpha} $. 
This means that in this alliance organization \Org{4} pays $\price({\{4\}})$. From the alliance, it receive two payments. The first payment is due to the cost executing tasks assigned to it from other organizations in the alliance, and the second corresponds to the reward $\imputation_{4}$  for taking part in it.
In any case, the imputation represents either a reduction in the costs of overloaded organizations (in this example, organization \Org{2}) or is a payment for the use of these resources by another organization (in this example, organization \Org{4}).

Now we will establish the relation between the price and the different costs when there is a stable imputation. Using Theorem~\ref{th:5}, we will compute the lower bound for non-empty bargaining sets.

\begin{corollary}\label{th:6}
Let $\SSS$ be a set of organizations and $\fctV$ be the characteristic function $\fctV$ (corresponding to the cost savings). Let $A$ be a subset of organizations $\{j\in \SSS :  \fctV(\SSS) > \fctV(\SSS \backslash\{j\})\}$. There exists a unique stable imputation $\imputation$ if $ \price(A)\geq  \Cost{A}    $.
 \end{corollary}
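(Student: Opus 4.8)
The plan is to read the corollary off Theorem~\ref{th:5}. That theorem already pins down the only candidate for a stable imputation: conditions~(1) and~(2) merely \emph{define} a vector $\imputation$ (namely $\imputation_j=0$ for $j\notin A$, and the displayed closed form for $j\in A$, as forced in the proof of Theorem~\ref{th:5} via Lemmas~\ref{lem:simple} and~\ref{lem:6.3}), so they impose nothing; the sole remaining requirement is condition~(3), which for this vector is exactly $\imputation_j\ge 0$ for every $j\in A$, since the right-hand side of~(3) is precisely the quantity subtracted in~(2). Thus the whole argument reduces to showing that the hypothesis $\price(A)\ge\Cost{A}$ forces $\imputation_j\ge 0$ for all $j\in A$.

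The key step is a summation. Adding the closed form of condition~(2) over $j\in A$, the averaged term $\tfrac{1}{|A|}\bigl(\Cost{A}+\sum_{k\in A}\Cost{-k}\bigr)$ is counted $|A|$ times, so the $\sum_{k\in A}\Cost{-k}$ contributions cancel against the $\sum_{j\in A}\Cost{-j}$ coming from the first part, leaving
\[
\sum_{j\in A}\imputation_j \;=\; \sum_{j\in A}\price(\{j\}) - \Cost{A} \;=\; \price(A)-\Cost{A}.
\]
Hence $\price(A)\ge\Cost{A}$ is equivalent to $\sum_{j\in A}\imputation_j\ge 0$, i.e.\ to the total payoff of $\SSS$ being non-negative. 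In particular, if $\price(A)<\Cost{A}$ then the canonical $\imputation$ has a strictly negative coordinate, so by Theorem~\ref{th:5} no stable imputation exists; this is the announced ``lower bound for non-empty bargaining sets''.

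To get the stated (sufficient) direction — that $\price(A)\ge\Cost{A}$ also makes every $\imputation_j$ non-negative, hence condition~(3) hold — I would use the monotonicity Assumptions~1--2 of Section~\ref{subsection:choreography}, equivalently that $\Cost{\cdot}-\price(\cdot)$ is non-increasing under coalition inclusion: for $j\in A$ one rewrites $\Cost{-j}+\price(\{j\})=\price(\SSS)-\fctV(\SSS\setminus\{j\})$ and bounds both $\Cost{A}$ and the average $\tfrac1{|A|}\sum_{k\in A}\Cost{-k}$ against this quantity through the chains $A\setminus\{j\}\subseteq A$ and $A\setminus\{j\}\subseteq\SSS\setminus\{j\}$. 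The main obstacle is exactly here: the identity above controls only the \emph{sum} of the terms $\Cost{-j}+\price(\{j\})$, whereas condition~(3) is a coordinate-wise statement, so passing from the single scalar inequality $\price(A)\ge\Cost{A}$ to the per-organization inequalities genuinely requires the ``balancedness'' of these costs supplied by the scheduling objectives $(\sum_J C_J)$ and $(\sum_J E_J)$; an arbitrary monotone characteristic function would not suffice, and this is the point where the concrete structure of the problem must be invoked.
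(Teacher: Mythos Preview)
Your core argument---summing the closed form from Theorem~\ref{th:5} over $j\in A$ to obtain $\sum_{j\in A}\imputation_j=\price(A)-\Cost{A}$, and hence that $\price(A)\ge\Cost{A}$ is exactly non-negativity of the total payoff---is correct and is the same computation the paper performs. The paper phrases it as summing the $|A|$ inequalities in condition~(3) directly, but since condition~(3) is just $\imputation_j\ge 0$ rewritten via~(2), the two summations are literally identical.

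Where you go further than the paper is in worrying about the \emph{sufficient} direction. Despite the wording of the corollary, the paper's proof establishes only necessity: it assumes condition~(3) holds (i.e.\ that a stable imputation exists) and derives $\price(A)\ge\Cost{A}$. The sentence immediately preceding the corollary (``we will compute the lower bound for non-empty bargaining sets'') confirms that this is the intended content, and Algorithm~1 uses the condition only as a screening test (returning \textsc{false} when $\price(A)<\Cost{A}$). So the gap you correctly identify---that a single scalar inequality on the sum cannot by itself force every coordinate $\imputation_j$ to be non-negative---is real, but it is not a gap in your argument relative to the paper: the paper simply does not prove, or need, that direction. You can drop the last paragraph of your proposal; what remains is exactly the paper's proof.
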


\begin{proof}
From Theorem~\ref{th:5}, we have:
\begin{center}
$\forall j \in A$, $\Cost{-j} + \price(\{j\})\geq   \frac{1}{|A|}\cdot \left( \Cost{A}  + \sum_{k\in A}\Cost{-k}  \right)   $.
\end{center}
By summing these $|A|$ inequations, we have:
\begin{center}
 $\displaystyle\sum_{k\in A} \left( \Cost{-j} + \price(\{j\}\right)\geq  \Cost{A}  + \sum_{k\in A}\Cost{-k}    $.
\end{center}
The previous inequation can then be rewritten as:
 $ \price(A)\geq  \Cost{A}$. \hfill $\qed$
\end{proof}

These results constitutes the mathematical framework needed to devise an algorithm that, given a set of organizations $\SSS$, determines whether an alliance is possible and, if possible, computes the imputation.





\begin{corollary}\label{th:7}
Let $\SSS$ be a set of organizations with their sets of jobs. 
  If all organizations have as objective function $(\sum_{J}C_{J})$ or $(\sum_{J}E_{J})$, then Algorithm~1 determines in polynomial time whether $\SSS$ can form a coalition and, if possible, returns the imputation vector.
\end{corollary}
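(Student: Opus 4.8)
The plan is to assemble an algorithm directly from the structural results already proved, and then argue that every step runs in polynomial time. First I would recall that the paper assumes the base scheduling problem (for either objective $(\sum_J C_J)$ or $(\sum_J E_J)$) is solvable in polynomial time and respects the monotonicity property; in particular, for any fixed subset $\C\subseteq\SSS$ the optimal cooperative cost $\costcoop{\C}$ can be computed in polynomial time, since it reduces to a single instance of the underlying scheduling problem on the pooled machines with job set $\Job{\C}$. Likewise each local cost $\costlocal{k}=\price(\{k\})$ is one scheduling computation, so $\price(\C)=\sum_{k\in\C}\costlocal{k}$ is easy.

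Next I would describe the algorithm explicitly. Step 1: compute $\costcoop{\SSS}$ and, for each $j\in\SSS$, compute $\costcoop{\SSS\setminus\{j\}}$; this is $N+1$ scheduling computations. Step 2: form the set $A=\{j\in\SSS:\fctV(\SSS)>\fctV(\SSS\setminus\{j\})\}$, using $\fctV(\C)=\price(\C)-\costcoop{\C}$; this is $O(N)$ arithmetic after Step 1. Step 3: compute $\Cost{A}=\costcoop{A}$ (one more scheduling computation) and check the feasibility test of Corollary~\ref{th:6}, namely $\price(A)\ge\Cost{A}$ — equivalently the per-organization test of Theorem~\ref{th:5}(3). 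If it fails, report that no stable imputation (hence no stable coalition) exists. Step 4: if it holds, output the imputation given by Theorem~\ref{th:5}: $\imputation_j=0$ for $j\in\SSS\setminus A$, and $\imputation_j=\Cost{-j}+\price(\{j\})-\frac{1}{|A|}\bigl(\Cost{A}+\sum_{k\in A}\Cost{-k}\bigr)$ for $j\in A$. Correctness is then immediate: Theorem~\ref{th:5} guarantees this vector is the unique stable imputation whenever condition (3) holds, and Corollary~\ref{th:6} shows condition (3) is equivalent to the single inequality $\price(A)\ge\Cost{A}$; conversely if the test fails, no vector satisfying the three conditions exists, and by Theorem~\ref{th:5} there is no stable imputation at all, so the coalition $\SSS$ cannot form.

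For the running-time bound I would count: the algorithm invokes the polynomial-time scheduler $O(N)$ times (once for $\SSS$, once for each $\SSS\setminus\{j\}$, once for $A$), plus $O(N)$ additional local-cost computations and $O(N)$ arithmetic operations, each on numbers of size polynomial in the input. Hence the total is polynomial in the size of the instance (number of organizations, number of jobs, and the bit-lengths of deadlines/processing volumes). I would note that for $(\sum_J E_J)$ the costs involve the exponent $\alpha$, so one should either work in the appropriate model of real arithmetic or observe that the comparisons needed only compare sums of $\alpha$-th powers of integers, which is fine in the standard model; I would phrase the statement accordingly so the polynomial-time claim is clean.

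The main obstacle I expect is precisely this bookkeeping about what ``polynomial time'' means in the presence of the real parameter $\alpha$ and the $\alpha$-th-power costs, together with making sure the cited polynomial-time solvability of the single-organization scheduling problems genuinely transfers to the pooled-machine instances $\Job{\C}$ — i.e.\ that pooling machines and jobs across organizations yields an instance of the very same problem the cited references solve. Everything else is a direct translation of Theorem~\ref{th:5} and Corollary~\ref{th:6} into pseudocode plus a counting argument, so the bulk of the work is in stating the computational model carefully rather than in any new mathematics.
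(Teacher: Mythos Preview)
Your proposal is correct and follows essentially the same approach as the paper: invoke the monotonicity and polynomial-time solvability of the two objectives, then apply Theorem~\ref{th:5} and Corollary~\ref{th:6} (which is exactly what Algorithm~1 implements). Your write-up is more detailed on the step count and the computational-model caveat for $\alpha$, but the underlying argument is the same.
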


\begin{proof}
  These objective functions respect the monotonicity property and an
  scheduling minimizing these objectives can be devise in polynomial
  time (see \cite{albers2011multi} for a polynomial algorithm for
  $(\sum_{J}E_{J})$). Therefore, we can apply Theorem~\ref{th:5} and
  Corollary~\ref{th:6}, which are implemented by Algorithm~1. \hfill
  \qed
\end{proof}

Note that the same result straightforwardly applies to any scheduling
problem whose objective function respects the two assumptions
described in Section~\ref{subsection:choreography}), as long as they
can be solved in polynomial time.

\begin{algorithm}[ht]
\KwIn{ A set $\SSS$ of organizations,  function $\fctV$   (corresponding to the cost savings), and  $\Cost{.}$.}
\KwOut{(Whether there is a alliance or not and the imputation vector)}

Compute the lowest cost schedule $\ShortestPath$\;

\ForAll{organizations $\Org{\player} \in \SSS$}{Compute the lowest cost schedule using only its own resources and its local cost $\left(\costlocal{\player}=\price(\{\player\})\right)$\; 

Compute the lowest cost schedule using  all  resources except $\Org{\player}$'s resources and  its  cost $\left(\Cost{-\player}\right)$\; 

}

Compute the lowest cost schedule using  all the resources and  its  cost ($\Cost{\SSS}$)\; 

\ForAll{organizations $\Org{\player} \in \SSS$}{Compute $ \price({\SSS\backslash\{\player\}})$ $\left(=\sum_{j\in \SSS, j\neq \player} \price(\{j\})\right)$\; 
Compute $ \fctV({\SSS\backslash\{\player\}})$ $\left(=\price({\SSS\backslash\{\player\}}) - \Cost{-\player}\right)$\;  
}

Compute $A= \{j\in \SSS \mid  \fctV(\SSS) > \fctV(\SSS \backslash\{j\})\}$, $ \price(A)$,  $\Cost{A}$   and
$\displaystyle \sum_{k\in A}\Cost{-k} $\;

 \If{$\price(A)<  \Cost{A}$ }{%
 \Return{(coalition=false, imputation=$\emptyset$)}}%

\ForAll{organization $\Org{\player} \in A$}{
  compute $\imputation_\player$ according to Equation (2) of Theorem~\ref{th:5}\;
}

\Return{(coalition=true, imputation=$\imputation$)}

\caption{Coalition detection algorithm.}
\label{algo:detection1}
\end{algorithm}

\section{Related work}

The problem of scheduling jobs on independent, selfish organizations
sharing a common infrastructure ---known as the Multi Organization
Scheduling Problem or MOSP--- was first studied by Pascual et
al.~\cite{pascual07,pascual11} and was then extended to include
notions from game-theory by Cohen et
al.~\cite{mosp-europar,cpe15}. The original studies does not include
the ability to form coalitions; MOSP only allows rebalancing the
jobs between organizations, as long as no organization presents a
performance degradation according to their its own performance
objective.

The choreography enactment pricing game is similar to fair resource
allocation and networking games. Fragnelli et al.~\cite{FragnelliGM00}
studied a related cooperative game that they called the shortest path
games. Their game models agents willing to transport a good through a
network from a source to a destination. Using a graph model, and
letting agents to control the nodes, they have studied how profits
should be allocated according to the core of the cooperation.


Maintaining the assumption of $s$-veto players, Voorneveld and
Grahn~\cite{VoorneveldG02} extended the shortest path game and proved
that the core allocations coincide with the payoff vectors in the
strong Nash equilibria of the associated non-cooperative shortest path
game.

Several subsequent papers~\cite{Aziz2011,Nebel11} studied
computability and complexity aspects of this game.  Some
properties of graphs and games guaranteeing the existence of a core
have been proposed and the computability complexity of computing cores
have been established (NP-complete and \#P-complete). Other variants
(different payoffs and players controlling arcs) have been considered,
but mostly focusing on the existence and complexity of cores, whereas
this work mostly focus on the construction of the bargaining set in
polynomial time.

The flow game can be view as maximum multicommodity flow problem in a
cooperative setting. This
model can be used to identify the set of demands to satisfy and to
route this demand on the network. In this context, players own network
resources and share a capacity to deliver commodities. Kalai et
al.~\cite{Kalai1982} first considered flow games for network with a
single commodity, where a unique player owns an arc. Several studies
(for example, \cite{Derks1985,MarkakisS03,AgarwalE08}) extended this
seminal work. Those extensions encompass variations on the number of
arcs a player can control, if the player controls all or a part of the
capacity of the arc, if players control vertices, etc. Those papers
mainly focus on how to obtain the optimal flow in the network and then
on how to allocate the revenue using core allocation techniques (since
those games have non-empty cores).


\section{Conclusion}
This work presents a game-theoretic model for the problem we call the
choreography enactment pricing problem. Service providers must enact,
i.e., assign all jobs that composes the users' applications (service
compositions) to resources of its own or subcontract resources from
other service providers.
Each organization must pay a price to execute the jobs from its
users. This price depends on a given cost function that is common to
all organizations (e.g., energy costs). In order to reduce its costs,
a set of organizations can form alliances (coalitions) to share its
resources and reduce its costs.

From an economic point of view, the lack of regulations may create
alliances that can control the prices and/or limit competition. Our
study of the choreography enactment pricing problem as a cooperative
game characterized the mathematical conditions needed for the creation
of stable coalitions. From this characterization we devised an
algorithm to detect if such alliances can be formed.

The study of the bargaining set of this problem suggests that this
problem may be related to the notion of \emph{truthful mechanism} from
algorithmic mechanism design in game theory. To the best of our
knowledge, there is no study about this relation and it would be an
interesting future work.

Also as future work, we will consider this problem with a broader set of
cost functions (such as the total makespan). For now,
our results cannot be apply cost functions that does not respect
monotony constraints.  One of the first step is to consider the load
as the cost (i.e. the sum of all jobs executed in the same
organization). Recently, Azar~ \cite{AzarHMRV15} designed a polynomial
randomized $2$-approximation algorithm for minimizing makespan using
restricted-related machines. They use technique to find an optimal
fractional solution\cite{Lenstra:1990} and they modify optimal
fractional solution that the load assigned to machine is monotone.
Using this result and from the expectation linearity, all our results
can be applied. It means that to find an imputation $\imputation$ such
that, for all $k$ ($1 \leq k \leq N$),
$\mathbb{E}[\CostS{k}] - \imputation_{i} \le
\mathbb{E}[\costlocal{k}]$ if such vector exists can be computed in
polynomial time. This technique can may be help to solve for the
formation of coalition considering makespan as the cost.

\end{document}